\documentclass[conference]{IEEEtran}
\usepackage{amsmath}
\usepackage{amssymb} 
\usepackage{amsthm} 
\usepackage{bbm}
\usepackage{mathrsfs}
\usepackage{dsfont}
\usepackage{graphicx,psfrag, color}
\usepackage{algorithm, algorithmic}
\newtheorem{theorem}{Theorem}
\newtheorem{lemma}[theorem]{Lemma}
\newtheorem{proposition}[theorem]{Proposition}

\newtheorem{ass}{Assumption}

\newcommand{\V}{\mathcal{V}}

\newcommand{\T}{\mathsf{T}}
\newcommand{\hard}{\eta}

\usepackage{epsfig} 
\usepackage{amsmath} 
\usepackage{upgreek}
\usepackage{graphicx,psfrag, color}
\usepackage{cite}
\usepackage{latexsym}
\usepackage{amssymb}
\usepackage{graphics}
\usepackage{pstricks}
\usepackage{dsfont}
\usepackage{mathrsfs}
\usepackage{bbm}
\usepackage{lettrine}%
\usepackage[T1]{fontenc}
\usepackage[english]{babel}
\usepackage[latin1]{inputenc}
\usepackage{graphicx}
\usepackage{pgf, tikz, pgflibraryshapes}
\usepackage{amsmath,amssymb}
\usepackage{colortbl}
\usepackage{multicol}
\usepackage{stackrel}

\newcommand{\cardV}{|\mathcal{V}|}

\newcommand{\fun}{\mathcal{F}}

\newcommand{\R}{\mathbb{R}}
\newcommand{\N}{\mathbb{N}}

\newcommand{\argmin}[1]{\underset{#1}{\mathrm{argmin\,}}}

\begin{document}

\title{Distributed soft thresholding\\ for sparse signal recovery}
\author{C.~Ravazzi, S.M.~Fosson, and~E.~Magli\\
Department of Electronics and Telecommunications,\\
        Politecnico di Torino, Italy 
}

\maketitle
\begin{abstract}
In this paper, we address the problem of distributed sparse recovery of signals acquired via compressed measurements in a sensor network. 
We propose a new class of distributed algorithms to solve Lasso regression problems, when the communication to a fusion center is not possible, e.g., due to communication cost or privacy reasons. 
More precisely, we introduce a distributed iterative soft thresholding algorithm (DISTA) that  consists of three steps: an averaging step, a gradient step, and a soft thresholding operation. We prove the convergence of DISTA in networks represented by  regular graphs, and we
compare it with existing methods in terms of performance, memory, and complexity.
\end{abstract}

\begin{keywords}
Distributed compressed sensing, distributed optimization, consensus algorithms, gradient-thresholding algorithms.
\end{keywords}

\section{Introduction}
Compressed sensing \cite{can06} is a new technique for nonadaptive compressed acquisition, which  takes advantage of the signal's sparsity in some domain and allows the signal recovery starting from few linear measurements. In this framework, distributed compressed sensing \cite{bar05} has emerged in the last few years with the aim of decentralizing data acquisition and processing. Most attention has been devoted to study how to perform decentralized acquisition, e.g., in sensor networks, assuming that recovery can be performed by a powerful fusion center that gathers and processes all the data. This model, however, has some significant drawbacks. First, data collection at a fusion center may be prohibitive in terms of energy utilization, particularly in large-scale networks, and may also introduce delays, which reduce the sensor network performance. Second, robustness is critical: a failure of the fusion center would stop the whole process, while some sensors' faults are generally tolerated in networks of 
considerable dimensions. Third, in several applications sensors may not be willing to convey their information to a fusion center for privacy reasons \cite{fore10}. 

In this work, we consider the problem of in-network processing and recovery in distributed compressed sensing as formulated in \cite{mat10}. 
Specifically, we assume that no fusion center is available and  we consider networks formed by sensors that can store a limited amount of information, perform a low number of operations, and communicate under some constraints. Our aim is to study how to achieve compressed acquisition and recovery leveraging on these seemingly scarce resources, with no  computational support from an external processor. The key point is to suitably exploit local communication among sensors, which allows to spread selected information through the network.  Based on this, we develop an iterative algorithm that achieves distributed recovery thanks to sensors' collaboration.

In a single sensor setting, the problem of reconstruction can be addressed in different ways. The $\ell_1$-norm minimization and Lasso are known to achieve optimal reconstruction under a sparsity model, but they are computationally expensive. E.g., the complexity of $\ell_1$-norm minimization is cubic in the length of the vector to be reconstructed. Therefore, several iterative methods have been developed, which yield suboptimal results at a fraction of the computation cost of the optimal methods \cite{for10}. On the other hand, the reconstruction problem in a distributed setting has received much less attention so far \cite{mat10}. The complexity of optimal methods is rather large also in the distributed case. This is even more important in a sensor network scenario, where the limited amount of energy and computation resources available at each node calls for algorithms with low complexity and low memory usage. This paper fills this gap, presenting a distributed iterative algorithm for compressed sensing 
reconstruction.
In particular, we propose a decentralized version of iterative thresholding methods \cite{for10}. The proposed technique consists of a gradient step that seeks to minimize the Lasso functional, a thresholding step that promotes sparsity and, as a key ingredient, a consensus step to share information among neighboring sensors.

Our aim in the design of this reconstruction algorithm is to achieve a favorable trade-off between the reconstruction performance, which should be as close as possible to that of the optimal distributed reconstruction \cite{mat10}, and the complexity and memory requirements. Memory usage is particularly critical, as microcontrollers for sensor networks applications typically have a few kB of RAM. As will be seen, the proposed algorithm is indeed only slightly suboptimal with respect to \cite{mat10}, in terms of the overall number of measurements required for a successful recovery. On the other hand, it features a much lower memory usage, making it suitable also for low-energy environments such as wireless sensor networks.

In this paper, we theoretically prove the convergence of the algorithm for networks that can be represented by regular graphs. Moreover, we numerically verify its good performance, comparing it with that of existing methods, such as distributed subgradient method (DSM, \cite{Asu10}) and alternating direction method of multipliers (ADMM,  \cite{mat10} and \cite {Boyd11}).
\section{Problem formulation}\label{sec1}
\subsection{Notation} Throughout this paper, we use the following notation. We denote column vectors with small letters, and matrices with capital letters. Given a matrix ${X}$, ${X}^{\mathsf{T}}$ denotes its transpose and $(X)_v$ (or $x_v$) denotes the $v$-th column of $X$. We consider $\R^n$ as a Banach space endowed with the following norms:
$
\|x\|_p =
\left(\sum_{i=1}^{n}|x_i|^p\right)^{1/p}$ with $p=1,2.
$
For a rectangular matrix $M\in\R^{m\times n}$, we consider  the Frobenius norm 
$
\left\|M\right\|_F=\sqrt{\sum_{i=1}^m\sum_{j=1}^nM_{ij}^2}=\sqrt{\sum_{j=1}^n\left\|(M)_{j}\right\|_2^2},
$
and the operator norm $\left\|M\right\|_2=\sup_{z\neq0}{\|Mz\|_2}/{\|z\|_2}.$
We define the sign function as $\text{sgn}(x)=1 \text{ if }x>0$, $\mathrm{sgn}(0)=0$ and
$\text{sgn}(x)=-1\ \text{otherwise}$.
If $x$ is a vector in $\R^{n}$, $\text{sgn}(x)$ is intended as a function to be applied elementwise.
A symmetric graph is a pair $\mathcal{G}=(\mathcal{V, E})$ where $\mathcal{V}$ is the set of nodes, and $\mathcal E\subseteq \mathcal{V\times V}$ is the set of edges with the property that $(i,i)\in \mathcal E$ for all $i\in\mathcal V$ and $(i,j)\in \mathcal E$ implies $(j,i)\in \mathcal E$. 
A $d$-regular graph is a graph where each node has $d$ neighbors. A matrix with non-negative elements $P$ is said to be stochastic if $\sum_{j\in \mathcal V}P_{ij}=1$ for every $i\in\mathcal V$. Equivalently, $P$ is stochastic if $P{\mathds{ 1}}={\mathds{ 1}}$. The matrix $P$ is said to be adapted to a graph $\mathcal{G}=(\mathcal{V},\mathcal{E})$ if $P_{v,w}=0$ for all $(w,v)\notin{\mathcal{E}}.$ 

\subsection{Model and assumptions}
We consider a sensor network whose topology is represented by a graph $\mathcal{G}=(\mathcal{V},\mathcal{E})$. We assume that each node $v\in\mathcal{V}$ acquires linear measurements of the form
\begin{equation}
y_v=A_vx_0+\xi_v
\end{equation}
where $x_0\in\R^n$ is a $k$-sparse signal (i.e., the number of its nonzero components is not larger than $k$), $\xi_v \in\R^m$ is an additive noise process independent from $x_0$, and $A_v\in\R^{m\times n}$  (with $n>\!\!>m$) is a random projection operator. 
If the data $(y_v,A_v)$ taken by all sensors were available at once in a single fusion center that performs joint decoding, a solution would be to solve the Lasso problem \cite{Tibshirani94regressionshrinkage}.
The Lasso refers to the minimization of the convex function $\mathcal{J}:\R^n\rightarrow\R$ defined by
\begin{equation}\label{LASSO}
\mathcal{J}(x,\lambda):=\sum_{v\in\mathcal{V}}
 \left\|y_v-{A}_vx\right\|_{2}^2+\frac{2\lambda}{\tau}\left\|x\right\|_1
\end{equation}
where $\lambda>0$ is a scalar regularization parameter that is usually chosen by cross validation \cite{Tibshirani94regressionshrinkage} and $\tau>0$. Let us denote the solution of \eqref{LASSO} as 
\begin{equation}\label{argminLasso}
\widehat{x}=\widehat{x}(\lambda)=\argmin{x\in\R^N} \mathcal{J}(x,\lambda).
\end{equation}
This optimization problem is shown to provide an approximation with 
a bounded error, which is controlled by $\lambda$ \cite{can06}.
A large amount of literature has been  devoted to developing fast algorithms for solving \eqref{LASSO} and characterizing the performance and optimality conditions. We refer to \cite{for10} for an overview of these methods.

 \subsection{Iterative soft thresholding}
A popular approach to solve \eqref{LASSO} is the iterative soft thresholding algorithm (ISTA). ISTA is based on moving at each iteration in the direction of the steepest descent followed by thresholding to promote sparsity \cite{dau04}. 

Let us collect the measurements in the vector $y=(y_1^\mathsf{T},\ldots, y_{\cardV}^\mathsf{T})^{\mathsf{T}}$ and let $A$ be the complete sensing matrix $A=(A_1^\mathsf{T},\ldots, A_{\cardV}^\mathsf{T})^{\mathsf{T}}$.
Given $x{(0)}$, iterate for $t\in \N$
\begin{align*}
x{(t+1)}&=\eta_{\lambda}(x(t)+\tau A^\mathsf{T}(y-Ax))
\end{align*}
where $\tau$ is the stepsize in the direction of the steepest descent. The operator $\eta$ is a thresholding function to be applied elementwise, i.e. $\eta_{\lambda}(x)=\text{sgn}(x)(|x|-\lambda)$ if $|x|<\lambda$ and $\eta_{\lambda}(x)=0$ otherwise.
The convergence of this algorithm was proved in \cite{dau04}, under
the assumption that $\|A\|_2^2<{1}/{\tau}.$

\section{Proposed distributed iterative soft thresholding algorithm}
In this section, we introduce our distributed iterative soft thresholding algorithm (DISTA), which has been developed following the idea of minimizing a suitable distributed version of the Lasso functional. In the next, we first discuss such a functional and then we show the algorithm.

\subsection{DISTA description}
We recast the optimization problem in \eqref{LASSO} into a separable form which facilitates distributed implementation. The goal is to split this problem  into simpler subtasks executed locally at each node.
Let us replace the global variable $x$ in \eqref{LASSO} with local variables $\{x_v\}_{v\in\mathcal{V}}$, representing estimates of $x_0$ provided by each node. While the conventional centralized Lasso problem attempts to minimize $\mathcal{J}(x,\lambda)$, we  recast the distributed problem as an iterative minimization of the functional $\mathcal{F}:\mathbb{R}^{n\times|\mathcal{V}|}\longmapsto \mathbb{R}^+$ defined as follows
\begin{align}\begin{split}\label{DLasso}
\mathcal{F}(x_1,\ldots,x_{|\mathcal{V}|}):=\sum_{v\in\mathcal{V}}&\left[q\|y_v-A_vx_v\|_2^2+\frac{2\alpha}{\tau_v|\mathcal{V}|}\|x_v\|_1\right.\\
&\left.+\frac{1-q}{\tau_v}\sum_{w\in\mathcal{V}}P_{v,w}\|\overline{x}_w-{x}_v\|_2^2\right]
\end{split}
\end{align}
where $P=[P_{v,w}]_{v,w\in\mathcal{V}}$ is a stochastic matrix adapted to the graph $\mathcal{G}$, and for some $q\in(0,1)$. 

By minimizing $\mathcal{F}$, each node seeks to recover the sparse vector $x_0$ from its own linear measurements, and to enforce agreement with the estimates calculated by other sensors in the network.
It should also be noted that  
$\mathcal{F}(\bar x,\ldots,\bar x)=q\mathcal{J}(\bar x,\lambda)$ if and only if $x_v=\overline x$, $\alpha=q\lambda$ and $\tau_v=\tau$ for all $v\in\mathcal{V}$.

Note that $q$ can be viewed as a temperature
parameter; as $q$ decreases, estimates $x_v$ associated with
adjacent nodes become increasingly correlated.
Let us denote as $\{\widehat{x}_v^{q}\}_{v\in\mathcal{V}}$ a minimizer of \eqref{DLasso}.
If $\mathcal{G}$ is connected, then we expect that $\lim_{q\rightarrow0} \widehat{x}_v^{q} = \widehat{x} $, $\forall v\in\mathcal{V}.$ This fact suggests that if $q$ is sufficiently small, then
each vector $\widehat{x}_v^{q} $ can be used as an estimate of $x_0$.

DISTA seeks to minimize \eqref{DLasso} in an iterative, distributed way. The key idea is as follows. Starting from $x_v(0)=0$ for any $v\in\V$, each node $v$ stores two messages at each time $t\in\mathbb{N}$,  ${x}_v(t)$ and $\overline{x}_v(t)$. The update is performed in an alternating fashion: at even $t$, $\overline{x}_v(t+1)$ is obtained by a weighted average of the estimates $x_w(t)$  for each $w$ communicating with $v$; at odd $t$, $x_v(t+1)$ is computed as a thresholded convex combination of a consensus and a gradient term.
More precisely, the pattern is summarized in Algorithm \ref{alg1}.
\begin{algorithm}
  \caption{DISTA}\label{alg1}
  \begin{algorithmic}
  \STATE {Given symmetric, row-stochastic matrix $P$  adapted to the graph,  $\alpha$, $\tau_v>0$, $x_v(0)=0$,  $y_v=A_vx_0$ for any $v\in\V$, iterate}
  \begin{itemize}
\item     {$t \in 2\mathbb{N}$, $v\in\V$,}
     \begin{align*}
 \overline{x}_v(t+1)&=\sum_{w\in\mathcal{V}}P_{v,w} {x}_w(t)\\
x_v(t+1)&=x_v(t)
\end{align*}
 \item {$t \in 2\mathbb{N}+1$, $v\in\V$,}
\begin{align*}
\overline{x}_v(t+1)&=\overline{x}_v(t)\\
 {x}_v(t+1)&=\eta_{\alpha}\left[(1-q)\sum_{w\in\mathcal{V}}P_{v,w} \overline{x}_w(t)\right.\\
&+q\Big( {x}_v(t)+ \tau_v A_v^{\mathsf{T}}(y_v-A_v  {x}_v(t))\Big)\Bigg]
\end{align*}
\end{itemize}
   \end{algorithmic}
\end{algorithm}
It should be noted that DISTA provides a distributed protocol: each node only needs to be aware of its neighbors and no further information about the network
topology is required. Moreover, if $\cardV=1$, DISTA coincides with ISTA. In section \ref{Tr}, we will prove its convergence and show that it minimizes $\fun$.

\subsection{Discussion and comparison with related work}
Algorithms for distributed sparse recovery (with no central processing unit) in sensor networks have been proposed in the literature in the last few
years. We distinguish two classes:
\begin{enumerate}
\item algorithms based on the decentralization of subgradient methods for convex optimization (DSM, \cite{Asu10});

\item distributed implementation of the alternate method of multipliers (ADMM, \cite{mat10,mot12, Boyd11});
\end{enumerate}

\subsubsection{DSM}\label{par_DSM}
Our proposed approach leverages distributed algorithms for multi-agent optimization that have been proposed in the literature in the last few years \cite{Asu10}.
The main goal of these algorithms is to minimize over a convex set the sum of cost functions that are convex and differentiable almost everywhere.  
It should be noted that these methods can be applied to the Lasso functional.
The memory storage requirements and the computational complexity are similar to DISTA, as it does not require to solve linear systems, to invert matrices, or to operate on the matrices $A_v$. However, we emphasize some substantial differences.

DSM is not guaranteed to converge while DISTA will be proved to converge to a minimum of  \eqref{DLasso}. The convergence can be achieved by considering the related ``stopped'' model (see page 56 in \cite{Asu10}), whereby the nodes stop computing the subgradient at some time, but they keep exchanging their information and averaging their estimates only with neighboring messages for subsequent time. However, the tricky point of such
techniques is the optimal choice of the number of iterations to stop the computation of the subgradient. Moreover, the limit point cannot be variationally characterized and depends on the time we stop the model.
In \cite{Ram10distributedstochastic}, the stepsize for the subgradient computation decreases to zero along the iterations. This choice, however, requires to fix an initial time and is not be feasible in case of time-variant input: introducing a new input would require some resynchronization. For this reason the parameters $q,\tau $ in distributed iterative thresholding algorithms are kept fixed and will be compared with DSM with constant stepsize.

\subsubsection{Consensus ADMM}
The consensus ADMM \cite{mat10, Boyd11} is a method for solving problems in which the objective and the constraints are distributed across multiple processors. The problem in \eqref{LASSO} is solved by introducing dual variables $\omega_v$ and minimizing the augmented Lagrangian in a iterative way with respect to the primal and dual variables.
 The algorithm entails the following steps for each $t\in\N$: node $v$ receives the local estimates from its neighbors, uses them to evaluate the dual price vector and the new estimate via coordinate descent and thresholding. 
The bottleneck of the consensus ADMM is the inversion of the $n\times n$ matrices $(A_v^{\mathsf{T}}A_v+\rho I)$ for some $\rho>0$. 

Although the inversion of the matrices in consensus ADMM is performed off-line, the storage of an $n\times n$ matrix may be prohibitive for a low power node with a small amount of available memory. More precisely, for the consensus ADMM each node has to store  $2+m+mn+n^2+3n$ real values. DISTA, instead, requires only  $3+m+mn+2n$ real values, that correspond to $q$, $\alpha$, $\tau_v$, $y_v$, $A_v$, $x_v(t)$, and $\overline{x}_v(t)$. Suppose that the node can store $S$ real values: for the consensus ADMM, the maximum allowed $n$ is of the order of $\sqrt{S}$, while for DISTA is of the order of $S$. For example, let us consider the implementation of DISTA on STM32F microcontrollers with Contiki operating system. Each node has 16 kB of RAM; as the static memory occupied by consensus ADMM and DISTA is almost the same, let us neglect it along with the memory used by the operating system (the total is of the order of hundreds of byte). Using a single-precision floating-point format, $2^{12}$ real values can be 
stored in 16 kB. Therefore, even assuming that the nodes take just one measurement ($m=1$), consensus ADMM can handle signals with length up to 61 samples, while DISTA up to 1364.
This clearly shows that DISTA is much more efficient in low memory devices.

\section{Main results}
\subsection{Theoretical results}\label{Tr}
In this section, we summarize our convergence analysis of DISTA.

Let $X(t)=(x_1(t),\dots,x_{\cardV}(t))$, $
\overline{X}(t)=XP^{\mathsf{T}}
$, and define the operator $\Gamma: \R^{n\times\cardV}\longmapsto \R^{n\times \cardV}
$ where
\begin{center}$
(\Gamma X)_v=\eta_{\alpha}\left[(1-q)(\overline XP^{\T})_v+q(x_v+\tau_v A_{v}^{\mathsf{T}}(y_v-A_v x_v))\right]$
\end{center}
with $v\in\mathcal V$. DISTA can be equivalently rewritten as 
$X(t+1)=\Gamma X(t)
$
with any initial condition $X(0)$.

Our goal is to find sufficient conditions that guarantee the convergence of the estimates $X(t)$ to a finite limit point, and to characterize this limit point in terms of the properties of the function $\fun$ in \eqref{DLasso}.

In our analysis, we adopt the following  assumptions. 
\begin{ass}
$\mathcal{G}$ is a $d$-regular graph, $d$ being the degree of the nodes.
\end{ass}
\begin{ass}
The nodes in $\mathcal{V}$ use uniform weights, i.e., $P_{u,v}=1/d$ if $(u,v)\in\mathcal{E}$ and zero otherwise.
\end{ass}

The following theorem ensures that, under certain conditions on the stepsize $\{\tau_v\}_{v\in\mathcal{V}}$,  the problem of minimizing the function in \eqref{DLasso} is well-posed.

\begin{theorem}[Characterization of minima]\label{fixed oints_Gamma}
If $\tau_v<\|A_v\|_2^{-2}$
for all $v\in\mathcal{V}$, the set of minimizers of $\mathcal{F} (X)$ is not empty and coincides with the set $\mathrm{Fix}(\Gamma)=\{Z\in\R^{n\times \mathcal{V}}: \Gamma Z=Z\}$. 
\end{theorem}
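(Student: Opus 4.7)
The plan is to establish the theorem in two stages: existence of a minimizer of $\fun$, and then the identification of minimizers with $\mathrm{Fix}(\Gamma)$ through a columnwise first-order optimality analysis. Existence follows from standard convex analysis: $\fun$ is a sum of convex quadratic forms (the fidelity terms $q\|y_v-A_vx_v\|_2^2$ and the consensus penalties $\|\overline{x}_w-x_v\|_2^2$) plus strictly positive multiples of $\|x_v\|_1$, so it is proper, convex, continuous, and coercive on $\R^{n\times\cardV}$ (the $\ell_1$ contributions guarantee $\fun(X)\to\infty$ as $\|X\|_F\to\infty$); hence a minimizer exists.

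For the fixed-point characterization, since $\fun$ is convex, minimality of $X^*$ is equivalent to $0\in\partial\fun(X^*)$, a condition that decouples across columns. For each $v$ the $v$-th block collects (i) the gradient $-2qA_v^{\T}(y_v-A_vx_v^*)$ from the data-fit term, (ii) the subgradient set $\frac{2\alpha}{\tau_v\cardV}\partial\|x_v^*\|_1$ from the $\ell_1$ penalty, and (iii) the gradient of the consensus term with respect to $x_v$. The crucial computation is (iii): since $\overline{x}_w=\sum_u P_{w,u}x_u$, the variable $x_v$ enters both directly (through $-x_v$) and indirectly (through every $\overline{x}_w$ with weight $P_{w,v}$). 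Exploiting the symmetry $P=P^{\T}$ and the double stochasticity $P\mathds{1}=\mathds{1}=P^{\T}\mathds{1}$ granted by Assumptions 1 and 2, the indirect cross-terms collapse and the consensus gradient reduces to a clean expression proportional to $x_v-(XP^2)_v$. Rearranging, the optimality condition takes the form $w_v^*-x_v^*\in\alpha\,\partial\|x_v^*\|_1$ with $w_v^*=(1-q)(X^*P^2)_v+q(x_v^*+\tau_vA_v^{\T}(y_v-A_vx_v^*))$, which is precisely the argument that defines $(\Gamma X^*)_v$. Invoking the standard soft-thresholding characterization (namely $z=\eta_\alpha(w)$ iff $w-z\in\alpha\,\partial\|z\|_1$) yields $x_v^*=\eta_\alpha(w_v^*)$ for every $v$, i.e.\ $X^*=\Gamma X^*$. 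Every step is an equivalence, so the two sets coincide.

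The hypothesis $\tau_v<\|A_v\|_2^{-2}$ plays the familiar ISTA role: it guarantees that the local quadratic upper bound associated with the step of size $\tau_v$ majorizes the data-fidelity Hessian, so that the proximal-gradient interpretation of $\Gamma$ is valid and its fixed points coincide with genuine minimizers of $\fun$, rather than spurious critical points introduced by an over-aggressive step. The main obstacle I anticipate is the bookkeeping in (iii): the nested action of $P$, once inside $\overline{x}_w$ and once outside in $\sum_w P_{v,w}$, produces several $P^2$-type contributions that must telescope to the compact form $x_v-(XP^2)_v$; pushing this through requires careful use of both symmetry and row/column stochasticity of $P$, and analogous manipulations will reappear in the convergence analysis.
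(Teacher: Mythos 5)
Your argument is essentially correct and takes a genuinely different route from the paper's. The paper omits the proof of this theorem (deferring to \cite{FMR}), but the machinery it sets up in the appendix --- the surrogate $\fun^{\mathcal{S}}(X,C,B)$ with $\fun^{\mathcal{S}}(X,\overline{X},X)=\fun(X)$, together with Propositions \ref{eta} and \ref{medie} identifying $\Gamma$ as one sweep of block minimization of that surrogate --- points to a majorization--minimization proof: under $\tau_v<\|A_v\|_2^{-2}$ the surrogate majorizes $\fun$ and touches it at $(X,\overline{X},X)$, so fixed points of $\Gamma$ are exactly the points at which the touching majorant is already minimal, hence minimizers of $\fun$. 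Your route is more elementary: coercivity (from $\alpha>0$ and the $\ell_1$ term dominating $\|X\|_F$) gives existence, joint convexity reduces identification to $0\in\partial\fun(X^\star)$, the subdifferential decouples over columns because the nonsmooth part is separable, and the consensus gradient collapses to a multiple of $x_v-(XP^2)_v$ once the indirect contributions cancel via $P=P^{\T}$ and stochasticity. One thing your route makes visible is that the hypothesis $\tau_v<\|A_v\|_2^{-2}$ is never actually used: the equivalence between stationarity of $\fun$ and the soft-thresholding fixed-point equation holds for any $\tau_v>0$, exactly as for ISTA, whereas the MM route needs the hypothesis to make $I-\tau_vA_v^{\T}A_v\succeq 0$. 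Your closing paragraph, which assigns this hypothesis a load-bearing role in ruling out ``spurious critical points,'' is therefore misleading with respect to your own argument.

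Two caveats you should repair. First, a normalization mismatch inherited from the paper: the $\ell_1$ weight in \eqref{DLasso} is $2\alpha/(\tau_v\cardV)$, so after multiplying the stationarity condition by $\tau_v/2$ you get $w_v^\star-x_v^\star\in(\alpha/\cardV)\,\partial\|x_v^\star\|_1$, i.e.\ $x_v^\star=\eta_{\alpha/\cardV}(w_v^\star)$, which matches the operator $\Gamma$ (built on $\eta_\alpha$) only if the $1/\cardV$ factor is dropped --- as it indeed is in the paper's own surrogate \eqref{surro}. You list the subgradient with the $\cardV$ factor and then silently drop it; pick one normalization and state it. Second, the ``clean collapse'' of the cross-terms in your step (iii) requires the weights $1/\tau_v$ multiplying the consensus penalties to be node-independent: with heterogeneous $\tau_v$ the inner sum $\sum_v (P_{v,w}/\tau_v)(\overline{x}_w-x_v)$ no longer telescopes to zero, and the stationarity condition acquires extra terms that do not match $\Gamma$. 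Since the theorem is stated with per-node $\tau_v$, either assume $\tau_v\equiv\tau$ explicitly or carry those terms through.
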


The proof is rather technical and is omitted for brevity. The interested reader can refer to \cite{FMR}.

Moreover,  
\begin{theorem}[Convergence]\label{teo:DISTA_convergence} If $\tau_v<\|A_v\|_2^{-2}$
for all $v\in\mathcal{V}$, DISTA produces a sequence $\{X(t)\}_{t\in\N}$ such that
$$
\lim_{t\rightarrow\infty}\left\|{X}(t)-X^{\star}\right\|_F=0
$$
where the limit point $X^{\star}$ is a fixed point of $\Gamma$.
\end{theorem}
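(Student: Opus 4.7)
The strategy mirrors the Daubechies--Defrise--De Mol surrogate argument used for centralised ISTA, lifted to the block structure of $\fun$ in \eqref{DLasso}.

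First, I would split the functional as $\fun(X)=f(X)+g(X)$, with $g(X)=\sum_{v\in\V}\tfrac{2\alpha}{\tau_v|\V|}\|x_v\|_1$ gathering the non-smooth $\ell_1$ part and $f$ the smooth convex quadratic remainder. A direct computation shows that $\Gamma X$ is the unique minimiser of a majorising surrogate $\mathcal{S}(Z,X)\ge\fun(Z)$ with $\mathcal{S}(X,X)=\fun(X)$, which decouples across the blocks $z_v$; its block-wise minimiser is exactly the soft-thresholding (at level $\alpha$) of $u_v(X):=(1-q)(\overline{X}P^{\mathsf{T}})_v+q(x_v+\tau_v A_v^{\mathsf{T}}(y_v-A_vx_v))$. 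The hypothesis $\tau_v<\|A_v\|_2^{-2}$, combined with $\|P\|_2\le 1$ (since $P$ is symmetric and stochastic), is precisely what forces $\mathcal{S}(Z,X)-\fun(Z)\ge 0$, via the elementary bound $\|A_v(z_v-x_v)\|_2^2\le\|A_v\|_2^2\|z_v-x_v\|_2^2$ together with a quadratic bound for the consensus term.

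From majorisation I would derive the descent inequality
$$\fun(X(t+1))\le\fun(X(t))-c\,\|X(t+1)-X(t)\|_F^2,$$
for a constant $c>0$ depending on $\min_v(\|A_v\|_2^{-2}-\tau_v)$. Since $\fun\ge 0$, the sequence $\{\fun(X(t))\}_t$ is monotone and converges, hence $\|X(t+1)-X(t)\|_F\to 0$. The $\ell_1$ terms make $\fun$ coercive, so $\{X(t)\}$ is bounded; since $\Gamma$ is continuous, any accumulation point $X^\star$ satisfies $\Gamma X^\star=X^\star$, so $X^\star\in\mathrm{Fix}(\Gamma)$, which by Theorem~\ref{fixed oints_Gamma} coincides with the set of minimisers of $\fun$.

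The main obstacle is upgrading this subsequential convergence to convergence of the whole sequence. Here I would use a Krasnosel'ski\u{\i}--Mann / Opial-type argument. The crucial technical step is showing that, under the step-size condition, $\Gamma$ is an averaged non-expansive operator on $(\R^{n\times|\V|},\|\cdot\|_F)$: indeed, $\Gamma$ is the composition of the firmly non-expansive soft thresholding $\eta_\alpha$ with the affine block-wise map whose $v$-th component is $(1-q)(XP^2)_v+q\big(x_v+\tau_v A_v^{\mathsf{T}}(y_v-A_vx_v)\big)$, which is non-expansive thanks to $\tau_v<\|A_v\|_2^{-2}$ and $\|P\|_2\le 1$. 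Averagedness then yields Fej\'er monotonicity of $\{X(t)\}$ with respect to $\mathrm{Fix}(\Gamma)$; combined with the existence of a subsequential limit $X^\star\in\mathrm{Fix}(\Gamma)$, this forces the whole sequence to converge to $X^\star$ in Frobenius norm, completing the proof.
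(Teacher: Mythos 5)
Your proposal is correct and follows essentially the same route as the paper: a majorization--minimization surrogate whose block-wise minimizer is the DISTA update, giving monotone descent of $\fun$ and hence asymptotic regularity $\|X(t+1)-X(t)\|_F\to 0$, together with nonexpansivity of $\Gamma$ and nonemptiness of $\mathrm{Fix}(\Gamma)$ from Theorem~\ref{fixed oints_Gamma}. The only difference is cosmetic: you prove the final step inline via Fej\'er monotonicity plus a convergent subsequence (for which plain nonexpansivity already suffices, so the averagedness claim is stronger than needed), whereas the paper packages exactly these three ingredients into a citation of Opial's theorem.
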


These theorems guarantee that DISTA produces a sequence of estimates converging to a minimum of the function $\fun $ in \eqref{DLasso}. The sketch of the proof is deferred to the Appendix. 

It is worth mentioning that Theorem 2 does not imply that DISTA achieves
consensus: local estimates are very close to each other, but do not necessarily coincide at
convergence. Consensus can be obtained by letting $q$ go to zero or considering the related ``stopped'' model  whereby the nodes stop computing the subgradient at some time, but they keep exchanging their information and averaging their estimates only with neighbors messages for the rest of the time. Stopped models were introduced in \cite{Asu10}. 

\subsection{Numerical results}\label{numerical_results}
To demonstrate the good performance of DISTA, we conduct a series of experiments
for the complete graph architecture and for a variety of total number of measurements. We consider the complete topology where $P_{ij}=\frac{1}{N}$ for every $i,j=1,\dots, N$.
For a fixed $n$, we
construct random recovery scenarios for sparse vector $x_0$. For each $n$, we vary the number of measurements $m$ per node and the number of nodes in the network.
For each $(N, m, |\mathcal{V}|)$, we repeat the following procedure 50 times.
A signal is generated by choosing $k$ nonzero components  uniformly among the $n$ elements and sampling the entries from a Gaussian distribution $\mathsf{N(0,1)}$.
Matrices $(A_v)_{v\in\mathcal{V}}$ are sampled from the Gaussian ensemble with $m$ rows, $n$ columns, zero mean  and variance $\frac{1}{m}$.
We fix $n = 150$, $k=15$, $\alpha=10^{-4}$, and $\tau=0.02$. 

In the noise-free case, we show the performance of DISTA in terms of reconstruction probability as a function of the number of measurements (see Figure \ref{Fig1}).
In particular, we declare $x_0$ to be recovered if $\sum_{v\in\mathcal{V}}\|x_0-x^{\star}_v\|^2_2\big{/} (n|\mathcal{V}|)< 10^{-4}$.
\begin{figure}[t]
\includegraphics[width=0.99\columnwidth]{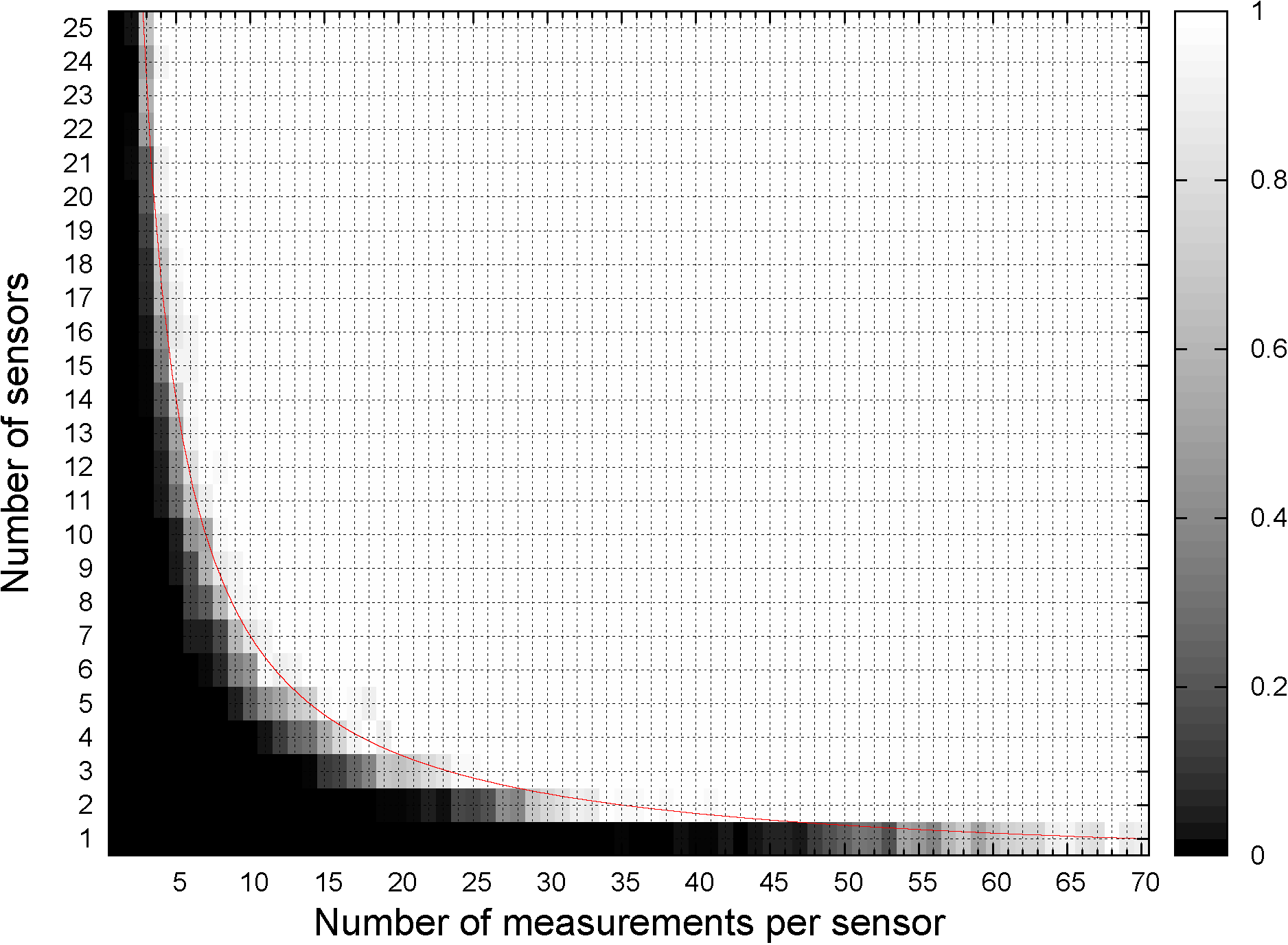}
\caption{Noise-free case: recovery probability of DISTA for a complete graph, $n=150$, $k=15$. The red curve represents $m|\V|=70.$}
\label{Fig1}
\end{figure}
The color of the cell reflects the empirical recovery rate of the 50 runs (scaled between 0 and 1). White and black respectively denote perfect recovery and failure for all experiments. It should be noted that the number of total measurements $m|\mathcal{V}|$, which are sufficient for successfully recovery, is constant: the red curve collects the points $(m,|\V|)$ such that $m|\V|=70$, which turns out to be a sufficient value to obtain  good reconstruction.

In Figure \ref{Fig2} the probability of success of DISTA, consensus ADMM, and DSM are compared as a function of the number of measurements per node. The curves are depicted for different numbers of sensors.
We  notice that the number of measurements needed for success by DISTA is smaller with respect to DSM. On the other hand, DISTA has performance close to the optimal ADMM: we obtain an almost perfect match in the curves obtained in the same scenario. In \cite{FMR}, we have compared also the time of convergence: as known, DSM has problems of slowness \cite{mat10} and thousands of steps are not sufficient to converge. DISTA, instead, is significantly faster, hence feasible. It does not reach the quickness of consensus ADMM, which however has the price of the inversion of a $n\times n$ matrix to start the algorithm. 
The interested reader can refer to \cite{FMR} for a test of the algorithm with a real dataset.

Finally, let us consider the noisy case. In Figure \ref{Fig3}, the mean square error
{\small{$$\textsf{MSE}=\frac{\sum_{v\in\mathcal{V}}\|{x_0}-x^{\star}_v\|^2_2 }{n|\mathcal{V}|},$$}}averaged over 50 runs is plotted as a function of the signal-to-noise ratio {\small{$$\textsf{SNR}=\frac{\mathbb{E}\left[\sum_{v\in\mathcal{V}}\|y_v\|_2^2\right]}{\mathbb{E}\left[\sum_{v\in\mathcal{V}}\| \xi_v\|_2^2\right]}.$$ }} The number of sensors is $|\mathcal{V}|=10$. 
The graph shows that DISTA performs better then DSM, even at larger compression level: taking $m=8$ is sufficient for DISTA to obtain a MSE lower than that obtained by DSM with $m=12$ measurements. Notice that this is the best performance that can be obtained by  DSM in this setting, that is, even without compression we do not see any improvement. On the other hand,  DISTA with $m=12$ is worse than consensus ADMM with same $m$ or with  $m=8$, but it is better than consensus ADMM with $m=6$ for sufficiently large SNR. In conclusion, DISTA can achieve the optimal performance of consensus ADMM at the price of a smaller compression level, which is not achievable by DSM.

\begin{figure}[t]
\includegraphics[width=0.99\columnwidth]{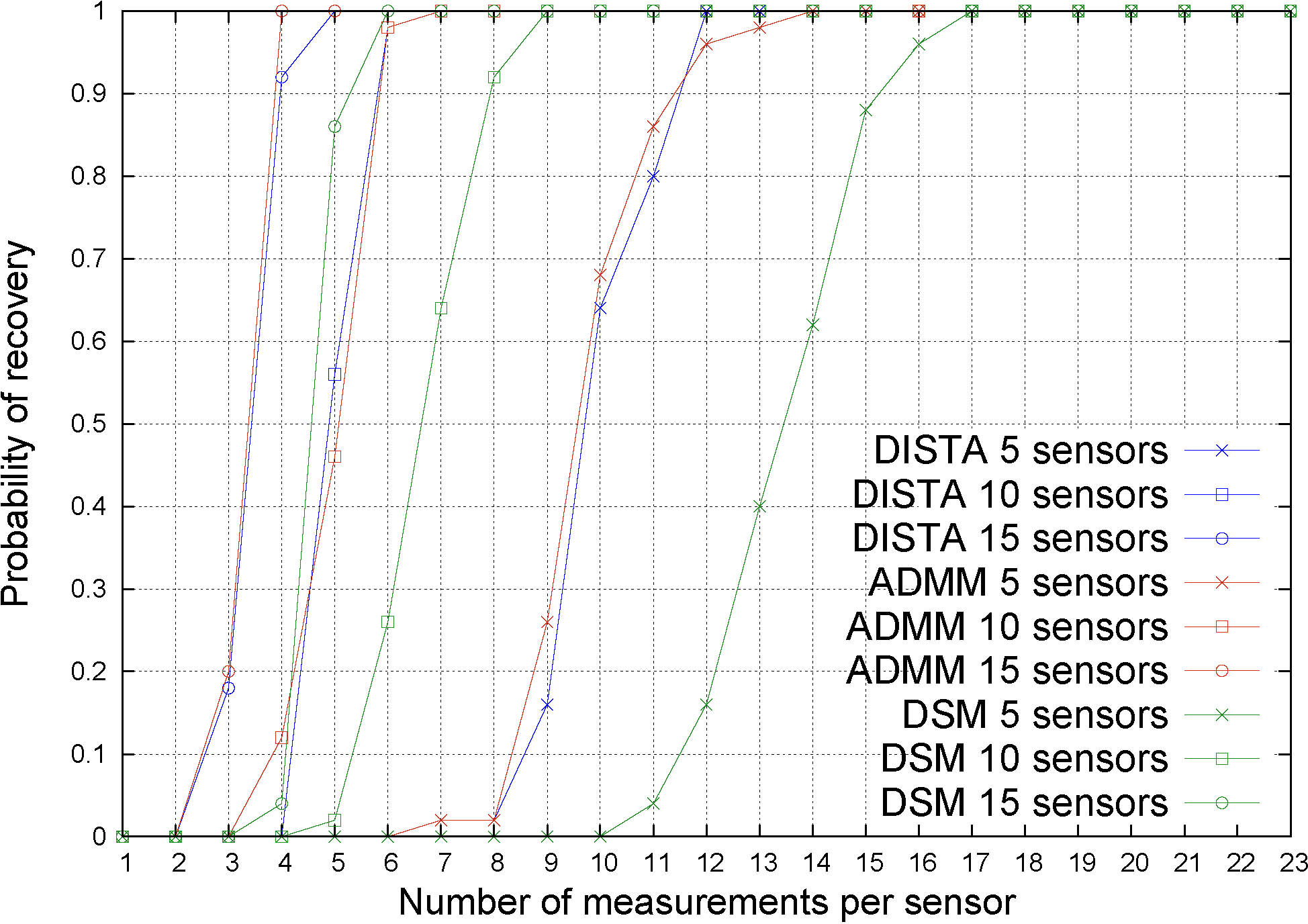}
\caption{Noise-free case:  DISTA vs DSM and consensus ADMM, complete graph, $n=150$, $k=15$.}
\label{Fig2}
\vspace{0.4cm}
\includegraphics[width=0.99\columnwidth]{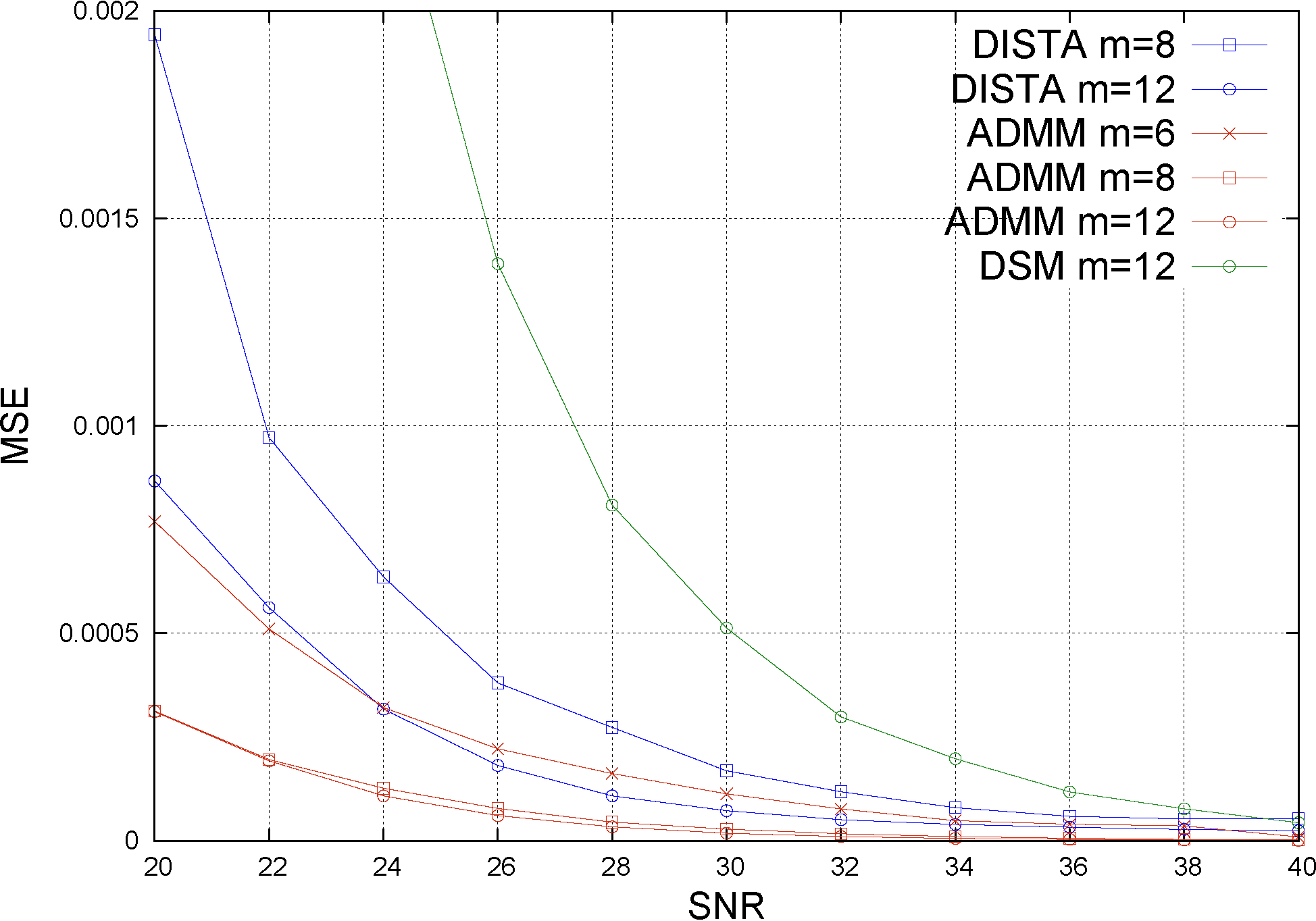}
\caption{Noise case: DISTA vs DSM and consensus ADMM, complete graph,  $n=150$, $k=15$, $|\mathcal{V}|=10$.}
\label{Fig3}
\end{figure}

\section{Concluding remarks}

The problem of distributed estimation of
sparse signals from compressed measurements in sensor networks with limited communication
capability has been studied. We have proposed the first iterative algorithm for distributed reconstruction, based on the iterative soft thresholding principle. This algorithm has low complexity and memory requirements, making it suitable for low energy scenarios such as wireless sensor networks. Numerical results show that the proposed algorithm outperforms existing distributed schemes in terms of memory and complexity, and is almost as good as the consensus ADMM method. We have also provided proof of convergence in the case of regular graphs.

\section{Acknowledgment}
This work is supported by the European Research Council under the European Community's Seventh Framework Programme (FP7/2007-2013) / ERC Grant agreement n.279848.

\bibliographystyle{IEEEbib}
\bibliography{DITA}

\appendix
\section{Proof of Theorem 2}
We provide a sketch of the proof  that the sequence of the  $\{X(t)\}_{t\in\N}$ converges to a fixed point of $\Gamma$, applying the Opial's Theorem to the operator $\Gamma$ (see Theorem 2).

\begin{theorem}[Opial's theorem  \cite{opi67}]
Let $T$ be an operator from a finite-dimensional space $\mathbb{S}$ to itself that satisfies the following conditions:
\begin{enumerate}
\item  $T$ is asymptotically regular (i.e., for any $x\in \mathbb{S}$, and for $t\in\N$, $\left\|T^{t+1} x-T^t x\right\|_2\to 0$ as $t\to\infty$);
\item $T$ is nonexpansive (i.e., $\left\|T x-T z\right\|_2\leq \left\| x-z\right\|_2$ for any $x,z \in \mathbb{S}$);
\item $\mathrm{Fix}(T)\neq \emptyset$,  $\mathrm{Fix}(T)$ being the set of fixed point of $T$.
\end{enumerate}
Then, for any $x\in\mathbb{S}$, the sequence $\{T^t(x)\}_{t\in\N}$ converges weakly to a fixed point of $T$.
\end{theorem}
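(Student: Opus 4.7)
The statement is Opial's theorem restricted to a finite-dimensional space $\mathbb{S}$ (which the authors endow with the Euclidean norm, since weak and strong convergence coincide in finite dimensions). The plan is to exploit this simplification: it suffices to produce a single point $z\in\mathrm{Fix}(T)$ such that $T^t x\to z$ in norm. The engine of the proof will be the elementary observation that, for any fixed point $p\in\mathrm{Fix}(T)$, the scalar sequence $\|T^t x-p\|_2$ is nonincreasing (Fej\'er-type monotonicity); combined with finite-dimensional compactness and asymptotic regularity, this pins down a unique cluster point.

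First I would fix an arbitrary $p\in\mathrm{Fix}(T)$ (nonempty by hypothesis~(3)) and use nonexpansiveness~(2) to write
\[
\|T^{t+1}x-p\|_2=\|T(T^t x)-T(p)\|_2\leq\|T^t x-p\|_2 ,
\]
so the nonnegative sequence $a_t:=\|T^t x-p\|_2$ is monotone decreasing and hence convergent. In particular $\{T^t x\}_{t\in\N}$ is bounded in $\mathbb{S}$, and by Bolzano--Weierstrass some subsequence $T^{t_k}x$ converges to a limit $z\in\mathbb{S}$.

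Next I would show that $z$ is a fixed point. Asymptotic regularity~(1) gives $\|T^{t_k+1}x-T^{t_k}x\|_2\to 0$, so $T^{t_k+1}x\to z$ as well. On the other hand, nonexpansiveness implies $T$ is continuous (indeed $1$-Lipschitz), so $T(T^{t_k}x)\to T(z)$. Comparing the two limits yields $T(z)=z$, i.e. $z\in\mathrm{Fix}(T)$.

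Finally, I would upgrade subsequential convergence to convergence of the whole sequence by re-applying the Fej\'er argument with the specific choice $p=z$: the sequence $b_t:=\|T^t x-z\|_2$ is nonincreasing, and $b_{t_k}\to 0$ along the extracted subsequence, so the limit of the monotone sequence is $0$, giving $T^t x\to z$ in norm. I do not expect a serious obstacle here; the only point that would require real work in the original (Hilbert-space) Opial theorem, namely ruling out multiple weak cluster points via Opial's lemma, is bypassed completely in finite dimensions because norm-boundedness already gives strong-compactness and the monotonicity of $b_t$ forces a single limit. Thus the finite-dimensionality hypothesis is not just cosmetic: it turns what is genuinely a weak-convergence theorem into an elementary consequence of nonexpansiveness plus compactness.
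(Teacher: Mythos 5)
Your proof is correct. Note, however, that the paper does not prove this statement at all: it is quoted verbatim as a known result from the literature (Opial, 1967) and used as a black box in the proof of Theorem 2, so there is no ``paper proof'' to compare against step by step. What you have supplied is a legitimate, self-contained proof of the finite-dimensional specialization, and each step checks out: Fej\'er monotonicity $\|T^{t+1}x-p\|_2\leq\|T^t x-p\|_2$ for any $p\in\mathrm{Fix}(T)$ gives boundedness of the orbit; Bolzano--Weierstrass extracts a convergent subsequence $T^{t_k}x\to z$; asymptotic regularity plus the $1$-Lipschitz continuity of $T$ force $Tz=z$ (this replaces the demiclosedness principle needed in the Hilbert-space version); and re-running the monotonicity argument with $p=z$ upgrades subsequential to full convergence, since a nonincreasing nonnegative sequence with a subsequence tending to $0$ tends to $0$. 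Your closing remark is also accurate and worth emphasizing: Opial's original theorem concerns weak convergence in infinite-dimensional Hilbert spaces and requires Opial's lemma to rule out distinct weak cluster points, whereas in $\R^{n\times\cardV}$ (where the paper applies it, with the Frobenius norm) the statement collapses to the elementary compactness-plus-monotonicity argument you give. Your write-up therefore makes the appendix more self-contained than the paper's citation-only treatment, at essentially no cost.
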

It should be noticed that in $\R^{n}$ the weak convergence coincides with the strong convergence.

Let us now prove that $\Gamma$ satisfies the Opial's conditions.  
Instead of optimizing \eqref{DLasso}, let us introduce a surrogate objective function:
\begin{equation}\label{surro}
\begin{split}
& \fun ^{\mathcal{S}}(X, C, B):=\sum_{v\in\mathcal V}\bigg(q\left\|A_v x_v-y_v\right\|_2^2+\frac{2\alpha}{\tau_v}\left\|x_v\right\|_{1}\\
&~~~~~~~~+\frac{1-q}{d\tau_v}\sum_{w \in \mathcal N_v}\left\|x_v-c_w\right\|_2^2 +\frac{q}{\tau_v}\left\|x_v-b_v\right\|_2^2\\
&~~~~~~~~-q\left\|A_v (x_v- b_v)\right\|_2^{2}\bigg)
\end{split}
\end{equation} 
where $C=(c_1,\dots,c_{\cardV})\in\R^{n\times \cardV}$, $B=(b_1,\dots,b_{\cardV})\in\R^{n\times \cardV}$. 
It should be noted that, defining $\overline{X}=XP^{\mathsf{T}}$, 
$$
\fun^{\mathcal{S}}(X,\overline{X},X)= \fun (X).
$$
If we suppose $\alpha>0$ and $\tau_v<\|A_v\|_2^{-2}$ for all $v\in\V$, then $\fun^{\mathcal{S}}$  is a majorization of $ \fun $, and minimizing $\fun^{\mathcal{S}}$  leads to a majorization-minimization (MM) algorithm.
The optimization of \eqref{surro} can be computed by minimizing with respect to each $x_v$ separately.

\begin{proposition}\label{eta}The following fact holds:
{\small{\begin{align*}
\argmin{x_v\in\R^n}  \fun ^{\mathcal{S}} (X,C,B)=\hard_{\alpha}\left[(1-q)\overline{c}_v+q b_v+q\tau A_v^{\mathsf{T}}(y_v-A_vb_v)\right]
\end{align*}}}
where $\overline{c}_v=\frac{1}{d}\sum_{w\in\mathcal N_v}c_w$.
\end{proposition}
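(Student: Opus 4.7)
The plan is to exploit two structural features of the surrogate $\fun^{\mathcal{S}}$: first, it is separable across the local variables $x_v$ (none of the cross‐node coupling terms involve two different $x_v$'s), and second, the auxiliary term $-q\|A_v(x_v-b_v)\|_2^2$ was precisely inserted to cancel the Hessian of $q\|A_v x_v - y_v\|_2^2$, leaving only linear contributions in $x_v$ from the measurement fit. Thus the minimization collapses, for each $v$ independently, into a one-dimensional proximal problem of the standard Lasso form, whose minimizer is given by the soft-thresholding operator $\eta_\alpha$.

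Concretely I would first isolate the function $f_v(x_v)$ obtained from $\fun^{\mathcal{S}}(X,C,B)$ by dropping every term not involving $x_v$. Then I would expand
\[
q\|A_v x_v - y_v\|_2^2 - q\|A_v(x_v - b_v)\|_2^2
= -2q\langle x_v,\, A_v^{\mathsf{T}}(y_v - A_v b_v)\rangle + \text{const},
\]
where the quadratic $x_v^{\mathsf{T}} A_v^{\mathsf{T}} A_v x_v$ cancels exactly. Next I would expand the two remaining quadratic penalties:
\[
\frac{1-q}{d\tau_v}\sum_{w\in\mathcal{N}_v}\|x_v-c_w\|_2^2
= \frac{1-q}{\tau_v}\|x_v\|_2^2 - \frac{2(1-q)}{\tau_v}\langle x_v, \overline{c}_v\rangle + \text{const},
\]
using $\overline c_v = d^{-1}\sum_{w\in\mathcal{N}_v} c_w$, together with $\frac{q}{\tau_v}\|x_v-b_v\|_2^2 = \frac{q}{\tau_v}\|x_v\|_2^2 - \frac{2q}{\tau_v}\langle x_v, b_v\rangle + \text{const}$.

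Summing these, the $\|x_v\|_2^2$ coefficients combine to $\frac{1}{\tau_v}$ and the linear-in-$x_v$ coefficients package as $-\frac{2}{\tau_v}\langle x_v, u_v\rangle$ with
\[
u_v := (1-q)\overline{c}_v + q b_v + q\tau_v A_v^{\mathsf{T}}(y_v - A_v b_v).
\]
Completing the square gives
\[
f_v(x_v) = \frac{1}{\tau_v}\|x_v - u_v\|_2^2 + \frac{2\alpha}{\tau_v}\|x_v\|_1 + \text{const},
\]
which, up to the positive factor $1/\tau_v$, is exactly the proximal form $\|x_v - u_v\|_2^2 + 2\alpha\|x_v\|_1$. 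Finally, I would invoke the well-known closed-form expression for this separable, strictly convex problem, whose unique minimizer is the componentwise soft-thresholding $\eta_\alpha(u_v)$, yielding the claimed identity.

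I do not anticipate a genuine obstacle: the cancellation of the $A_v$-quadratic is the only delicate part, but it is immediate by expansion. The work is essentially bookkeeping to group coefficients of $\|x_v\|_2^2$ and $\langle x_v,\cdot\rangle$ correctly and then to recognize the proximal operator of $\alpha\|\cdot\|_1$. (Note the mild discrepancy between $\tau$ and $\tau_v$ in the stated bracket, which the derivation naturally resolves by producing $\tau_v$.)
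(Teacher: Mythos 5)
Your derivation is correct and is precisely the intended argument: the paper states Proposition~\ref{eta} without proof (deferring details to \cite{FMR}), but the surrogate $\fun^{\mathcal{S}}$ is constructed exactly so that the quadratic $x_v^{\mathsf{T}}A_v^{\mathsf{T}}A_v x_v$ cancels and the problem reduces, node by node, to the proximal problem $\min_{x_v}\|x_v-u_v\|_2^2+2\alpha\|x_v\|_1$ solved by componentwise soft thresholding. Your bookkeeping of the coefficients (the $\|x_v\|_2^2$ terms summing to $1/\tau_v$, the linear terms assembling into $u_v$) checks out, and you are right that the $\tau$ in the stated bracket should read $\tau_v$.
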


\begin{proposition}\label{medie}  If $\tau_v<\left\|A_v\right\|_2^{-2}$ for all $v\in\V$ the following facts hold:
\begin{align}
\argmin{c_v\in\R^n} \fun ^{\mathcal{S}}(X,C,B)&=\frac{1}{d}\sum_{w\in\mathcal{N}_v}{x}_w,\\
\label{stationary point}
\argmin{b_v\in\R^n} \fun ^{\mathcal{S}} (X,C,B)&=x_v.
\end{align}
\end{proposition}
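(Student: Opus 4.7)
My plan is to exploit the very restricted way in which each auxiliary variable $c_v$ and $b_v$ enters the surrogate $\fun^{\mathcal{S}}(X,C,B)$: each subproblem is an unconstrained convex quadratic in a finite-dimensional Euclidean space, so it suffices to isolate the relevant terms and set the gradient to zero.

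For the first claim, I first identify the $c_v$-dependent portion of $\fun^{\mathcal{S}}$. The variable $c_w$ only appears in the consensus term $\frac{1-q}{d\tau_v}\sum_{w\in\mathcal N_v}\|x_v-c_w\|_2^2$, so $c_v$ enters $\fun^{\mathcal{S}}$ through those outer indices $u\in\V$ for which $v\in\mathcal N_u$. Since $\mathcal G$ is symmetric, that index set is exactly $\mathcal N_v$, and the $c_v$-dependent part of the objective becomes
\[
\sum_{u\in\mathcal N_v}\frac{1-q}{d\tau_u}\|x_u-c_v\|_2^2 .
\]
Setting the gradient in $c_v$ to zero produces the corresponding (positively) weighted average of $\{x_u\}_{u\in\mathcal N_v}$; under the homogeneous regime $\tau_u\equiv\tau$ that matches the statement of the proposition this collapses to $c_v=(1/d)\sum_{w\in\mathcal N_v}x_w$, as claimed.

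For the second claim, the only terms of $\fun^{\mathcal{S}}$ containing $b_v$ are those indexed by the outer variable $v$ itself, namely $\frac{q}{\tau_v}\|x_v-b_v\|_2^2$ and $-q\|A_v(x_v-b_v)\|_2^2$. Introducing the shift $u=b_v-x_v$, their sum collapses to the single quadratic form
\[
q\,u^{\transp}\!\left(\tau_v^{-1}I-A_v^{\transp}A_v\right)u .
\]
The hypothesis $\tau_v<\|A_v\|_2^{-2}$ is equivalent to $\tau_v^{-1}I-A_v^{\transp}A_v\succ 0$, so the form is strictly convex and vanishes only at $u=0$, giving the unique minimizer $b_v=x_v$.

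The main obstacle, and really the only nontrivial step, is the positive definiteness used in the $b_v$ part: the bound $\tau_v<\|A_v\|_2^{-2}$ is precisely what makes $\fun^{\mathcal{S}}$ a bona fide majorizer of $\fun$ along the $b$-slice, so that the identity $\fun^{\mathcal{S}}(X,\overline X,X)=\fun(X)$ is in fact the pointwise infimum in $b$. The $c_v$-minimization is routine once the symmetry of $\mathcal G$ is used to swap the order of summation in the consensus double sum.
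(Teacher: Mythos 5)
Your proof is correct and follows exactly the argument the paper implicitly relies on (Proposition~\ref{medie} is stated without proof in the appendix): isolate the terms of $\fun^{\mathcal{S}}$ that depend on $c_v$ (using the symmetry of $\mathcal{G}$ and $d$-regularity to rewrite the double sum) respectively on $b_v$, and note that the latter collapse to the quadratic form $q\,u^{\transp}(\tau_v^{-1}I-A_v^{\transp}A_v)u$, which is positive definite precisely when $\tau_v<\|A_v\|_2^{-2}$. Your remark that the unweighted average $\frac{1}{d}\sum_{w\in\mathcal{N}_v}x_w$ emerges only in the homogeneous regime $\tau_u\equiv\tau$ (otherwise the minimizer is the $1/\tau_u$-weighted average of the neighbors' estimates) is a valid and useful precision that the paper, which switches between $\tau$ and $\tau_v$ throughout, glosses over.
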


In few words, in DISTA the vectors $x_v$ and $\overline{x}_v$ are updated in an alternating fashion, separating the minimization of the consensus part from the regularized least square function in \eqref{DLasso}. 

We now prove that $\Gamma$ is asymptotically regular, i.e., that $X(t+1)-X(t)\rightarrow 0$ for $t\rightarrow\infty.$ In particular, this property guarantees the numerical convergence of the algorithm.

\begin{lemma}\label{decrescenza}If $\tau_v<\left\|A_v\right\|_2^{-2}$ for all $v\in\mathcal{V}$, then the sequence $\{ \fun (X(t))\}_{t\in\N}$ is non increasing and admits the limit.
\end{lemma}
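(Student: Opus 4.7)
The plan is to identify DISTA as an alternating minimization of the surrogate $\fun^{\mathcal S}$ of \eqref{surro}, and then to exploit the ``diagonal'' identity $\fun^{\mathcal S}(X,XP^{\mathsf T},X)=\fun(X)$ together with Propositions~\ref{eta} and \ref{medie} to assemble a telescoping sandwich.

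First I would set up the dictionary. Writing $\overline{X}(t):=X(t)P^{\mathsf T}$, a direct comparison of \eqref{surro} with \eqref{DLasso} under the uniform weights $P_{v,w}=1/d$ on neighbors shows that
\[
\fun^{\mathcal S}(X,\overline{X},X)=\fun(X),
\]
because the $B$-block $\frac{q}{\tau_v}\|x_v-b_v\|_2^2-q\|A_v(x_v-b_v)\|_2^2$ vanishes at $B=X$, and the $C$-block reduces to the consensus penalty of $\fun$ when $c_w=\overline{x}_w$. Next, Proposition~\ref{eta} reads the odd DISTA step as $X(t+1)=\arg\min_X\fun^{\mathcal S}(X,\overline{X}(t),X(t))$: substituting $C=\overline{X}(t)$ and $B=X(t)$ into the thresholding formula gives $\overline{c}_v=\sum_w P_{v,w}\overline{x}_w(t)$ and $b_v=x_v(t)$, reproducing Algorithm~\ref{alg1} exactly. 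Finally, Proposition~\ref{medie}---which is precisely where the hypothesis $\tau_v<\|A_v\|_2^{-2}$ enters, since it is needed to make the $B$-block strictly convex in $b_v$---states that for any fixed $X$ the pair $(\overline{X},X)$ jointly minimizes $\fun^{\mathcal S}(X,\cdot,\cdot)$ in $(C,B)$, the objective being separable in those two arguments.

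Given these three ingredients, monotonicity follows from the chain
\begin{align*}
\fun(X(t+1))&=\fun^{\mathcal S}\bigl(X(t+1),\overline{X}(t+1),X(t+1)\bigr)\\
&\leq\fun^{\mathcal S}\bigl(X(t+1),\overline{X}(t),X(t)\bigr)\\
&\leq\fun^{\mathcal S}\bigl(X(t),\overline{X}(t),X(t)\bigr)\\
&=\fun(X(t)),
\end{align*}
whose outer (in)equalities are the diagonal identity, whose first inequality is the joint $(C,B)$-minimality of $(\overline{X}(t+1),X(t+1))$ at $X=X(t+1)$, and whose second inequality is the $X$-minimality of $X(t+1)$ at $(C,B)=(\overline{X}(t),X(t))$. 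Since $\fun\geq 0$ summand by summand, the monotone sequence $\{\fun(X(t))\}_{t\in\N}$ is bounded below by $0$ and thus admits a finite limit.

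The main obstacle I anticipate lies not in the chain itself---which is standard majorization--minimization bookkeeping---but in the verification that the $B$-argmin identity of Proposition~\ref{medie} is actually a \emph{minimum} and not merely a critical point. This hinges on the strict convexity of $\frac{q}{\tau_v}\|x_v-b_v\|_2^2-q\|A_v(x_v-b_v)\|_2^2$ in $b_v$, which is equivalent to $\tau_v<\|A_v\|_2^{-2}$; without it, the first inequality in the sandwich fails and the telescoping breaks.
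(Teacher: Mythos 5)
Your proof is correct and follows essentially the same majorization--minimization sandwich as the paper: the diagonal identity $\fun^{\mathcal S}(X,XP^{\mathsf T},X)=\fun(X)$, the $(C,B)$-minimality from Proposition~\ref{medie}, and the $X$-minimality from Proposition~\ref{eta}, with lower-boundedness giving the limit. The only (cosmetic) difference is that you collapse the paper's two separate inequalities --- first over $B$, then over $C$ --- into a single joint $(C,B)$ step, justified by separability; your remark that the hypothesis $\tau_v<\|A_v\|_2^{-2}$ is what makes the $B$-block positive definite matches where the paper actually uses it.
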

\begin{proof}
The function is lower bounded ($\mathcal{F}(X)\geq 0$) and the sequence $\{\mathcal{F}_p(X(t)\}_{t\in\mathbb{N}}$ is decreasing and therefore admits the limit. 

From Lemma \ref{eta} and Lemma \ref{medie} we obtain the following inequalities:
\begin{align*}
 \fun (X(t+1))&\leq \fun ^{\mathcal{S}}(X(t+1),\overline{X}(t+1),X(t))\\
&\leq \fun ^{\mathcal{S}}(X(t+1),\overline{X}(t),X(t))\\
&\leq  \fun ^{\mathcal{S}}(X(t),\overline{X}(t),X(t))= \fun (X(t)).
\end{align*}
\end{proof}

\begin{proposition}\label{succ_ite}
For any $\tau_v\leq\min_{v\in\mathcal{V}}\left\|A_v\right\|_2^{-2}$ the sequence $\{X(t)\}_{t\in\N}$ is bounded and
$$
\lim_{t\rightarrow+\infty}\left\|X(t+1)-X(t)\right\|_F^2=0.
$$
\end{proposition}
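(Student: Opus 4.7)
The plan is to exploit the majorization-minimization structure already set up by the surrogate $\fun^{\mathcal{S}}$ together with the monotonicity from Lemma~\ref{decrescenza}. In outline, I would first obtain boundedness as a free byproduct, then quantify the per-iteration decrement of $\fun$ from below by $\|X(t+1)-X(t)\|_F^2$ via strong convexity of the surrogate, and finally telescope the resulting inequality.

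Boundedness is immediate. Lemma~\ref{decrescenza} gives $\fun(X(t)) \leq \fun(X(0))$, and since every summand in \eqref{DLasso} is non-negative, the $\ell_1$ contribution yields $\|x_v(t)\|_1 \leq C$ uniformly in $t$ and $v$ for some constant $C$ depending on $\alpha$, $\tau_v$, $|\V|$ and $\fun(X(0))$. Using $\|\cdot\|_2 \leq \|\cdot\|_1$, this uniformly bounds $\|X(t)\|_F$.

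For the summable-increment estimate, the ingredient I would establish first is that, for any fixed $C$ and $B$, the map $X \mapsto \fun^{\mathcal{S}}(X, C, B)$ is strongly convex in $X$ with modulus \emph{independent of the sensing matrices $A_v$}. The reason is the cancellation built into the surrogate: the quadratic-in-$x_v$ pieces contributed by $q\|A_v x_v - y_v\|_2^2$ and $-q\|A_v(x_v - b_v)\|_2^2$ cancel exactly, while the remaining terms $\frac{1-q}{d\tau_v}\sum_{w \in \mathcal N_v}\|x_v - c_w\|_2^2$ and $\frac{q}{\tau_v}\|x_v - b_v\|_2^2$ together contribute Hessian $\frac{2(1-q)}{\tau_v}I + \frac{2q}{\tau_v}I = \frac{2}{\tau_v}I$ to the $v$-th block, the $\ell_1$ term being convex and adding nothing harmful. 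Hence the modulus equals $2/\tau_v$ on each block. By Proposition~\ref{eta}, $X(t+1)$ is the unique minimizer of $\fun^{\mathcal{S}}(\cdot, \overline X(t), X(t))$, so the usual strong-convexity inequality at a minimizer gives
$$
\fun^{\mathcal{S}}(X(t), \overline X(t), X(t)) - \fun^{\mathcal{S}}(X(t+1), \overline X(t), X(t)) \geq \sum_{v\in\V}\frac{1}{\tau_v}\|x_v(t+1)-x_v(t)\|_2^2.
$$
Combining with the tightness identity $\fun(X(t)) = \fun^{\mathcal{S}}(X(t), \overline X(t), X(t))$ and the majorization $\fun^{\mathcal{S}}(X(t+1), \overline X(t), X(t)) \geq \fun(X(t+1))$ (both valid under $\tau_v \leq \|A_v\|_2^{-2}$), I get
$$
\fun(X(t)) - \fun(X(t+1)) \geq \frac{1}{\max_{v\in\V}\tau_v}\,\|X(t+1)-X(t)\|_F^2.
$$

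To conclude, I would simply telescope: since $\{\fun(X(t))\}$ has a finite limit by Lemma~\ref{decrescenza}, summing the previous inequality yields $\sum_{t\geq 0}\|X(t+1)-X(t)\|_F^2 < \infty$, and in particular $\|X(t+1)-X(t)\|_F^2 \to 0$. The main obstacle is the strong-convexity computation: one has to identify the exact cancellation of the $qA_v^{\mathsf{T}}A_v$ contributions, which is precisely what renders the descent modulus independent of the (potentially large) $\|A_v\|_2^2$ and lets the single step-size constraint $\tau_v \leq \|A_v\|_2^{-2}$ deliver both majorization and a non-trivial quadratic lower bound on the decrement. Everything else reduces to a telescoping argument once this quadratic estimate is in place.
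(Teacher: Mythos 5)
Your proof is correct, and it quantifies the per-step decrement by a genuinely different (and slightly sharper) mechanism than the paper's. The paper sandwiches $\fun(X(t))-\fun(X(t+1))$ around the gain obtained when the auxiliary block $B$ is moved from $X(t)$ to its optimum $X(t+1)$, which by direct computation equals $\sum_{v}\frac{q}{\tau_v}(x_v(t+1)-x_v(t))^{\mathsf{T}}(I-\tau_v A_v^{\mathsf{T}}A_v)(x_v(t+1)-x_v(t))$; you instead isolate the gain of the $X$-update and invoke blockwise strong convexity of $\fun^{\mathcal{S}}(\cdot,C,B)$, whose modulus $2/\tau_v$ is exactly the consequence of the $qA_v^{\mathsf{T}}A_v$ cancellation you identify. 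Both chains rest on the same two ingredients (the tightness $\fun^{\mathcal{S}}(X,XP^{\mathsf{T}},X)=\fun(X)$ and the majorization coming from Propositions \ref{eta} and \ref{medie}), but your bound
$\fun(X(t))-\fun(X(t+1))\geq \frac{1}{\max_{v}\tau_v}\left\|X(t+1)-X(t)\right\|_F^2$
is strictly stronger than the paper's: it does not degenerate at $\tau_v=\|A_v\|_2^{-2}$, where $I-\tau_vA_v^{\mathsf{T}}A_v$ is only positive semidefinite and passing from the vanishing of the quadratic form to $\|x_v(t+1)-x_v(t)\|_2\to0$ would require the strict inequality that the proposition's hypothesis does not impose. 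You also supply the boundedness claim explicitly (via the $\ell_1$ term of $\fun$ and the monotonicity from Lemma \ref{decrescenza}, which needs $\alpha>0$), something the paper states in the proposition but never argues in its proof. Your final telescoping step, giving summability of the squared increments, is a mild strengthening of the paper's direct appeal to the convergence of $\{\fun(X(t))\}$; either suffices for the stated limit.
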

\begin{proof}
By Lemma \ref{eta} we obtain 
\begin{align*}
&\mathcal{F}(X(t))-\mathcal{F}(X(t+1))\\
&\quad \geq  \fun ^{\mathcal{S}}(X(t+1),\overline{X}(t+1),X(t))\\
&\quad- \fun ^{\mathcal{S}}(X(t+1),\overline{X}(t+1),X(t+1))\\
&\quad\geq \frac{q}{\tau_v}\sum_{v\in\mathcal{V}}(x_v(t+1)-x_v(t))^{\mathsf{T}}M_v(x_v(t+1)-x_v(t))\geq 0.
\end{align*}
Notice that the last expression is nonnegative as $M_v=I-\tau_v A_v^{\mathsf{T}}A_v$ are positive definite for all $v\in\V$.

As $ \fun ^{\mathcal{S}}(X(t))- \fun ^{\mathcal{S}}(X(t+1)\to 0$ we thus conclude that $\left\|x_v(t+1)-x_v(t)\right\|_2^2\to 0$ for any $v\in\mathcal V$ and
$$\lim_{t\rightarrow+\infty}\left\|X(t+1)-X(t)\right\|_F^2=0.$$
\end{proof}

 We now prove that $\Gamma$ is nonexpansive.

\begin{lemma}\label{nonexp}
For any $\tau\leq\min_{v\in\mathcal{V}}\left\|A_v\right\|_2^{-2}$, $\Gamma$ is nonexpansive.
\end{lemma}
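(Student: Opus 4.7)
The plan is to bound $\|\Gamma X - \Gamma Z\|_F$ column-by-column, peel off the soft-thresholding using its $1$-Lipschitz property, and then control the remaining affine expression by splitting it into a ``consensus'' part and a ``gradient'' part and applying the triangle inequality with the weights $(1-q)$ and $q$.

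First I would fix arbitrary $X,Z\in\R^{n\times\cardV}$ and write out $(\Gamma X)_v-(\Gamma Z)_v = \eta_\alpha(u_v)-\eta_\alpha(w_v)$, where
\[
u_v=(1-q)\bigl(X(P^{\T})^2\bigr)_v+q\bigl((I-\tau_vA_v^{\T}A_v)x_v+\tau_vA_v^{\T}y_v\bigr),
\]
and $w_v$ is the analogous expression with $z_v$ in place of $x_v$. The data term $\tau_vA_v^{\T}y_v$ cancels in $u_v-w_v$. Since the scalar soft-thresholding map $\eta_\alpha$ is $1$-Lipschitz on $\R$, it is a nonexpansion on $\R^n$ in the $\ell_2$ norm applied componentwise, so $\|(\Gamma X)_v-(\Gamma Z)_v\|_2\le\|u_v-w_v\|_2$. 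Summing over $v$ and taking square roots gives $\|\Gamma X-\Gamma Z\|_F\le\|U-W\|_F$, where $U-W$ is the matrix whose $v$-th column is $u_v-w_v$.

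Next I would bound $\|U-W\|_F$ by the triangle inequality as a convex combination of two pieces. Set $\Delta=X-Z$. The ``consensus piece'' is the matrix $\Delta(P^{\T})^2$; since under Assumptions~1 and~2 the matrix $P$ is symmetric and stochastic, $\|P\|_2=1$, so the submultiplicative bound $\|MN\|_F\le\|M\|_F\|N\|_2$ yields $\|\Delta(P^{\T})^2\|_F\le\|\Delta\|_F\|P\|_2^2\le\|\Delta\|_F$. The ``gradient piece'' is the matrix $\Delta'$ whose columns are $(I-\tau_vA_v^{\T}A_v)\Delta_v$; the hypothesis $\tau_v\le\|A_v\|_2^{-2}$ forces $0\preceq I-\tau_vA_v^{\T}A_v\preceq I$, hence the operator norm of each such block is at most $1$, giving $\|(I-\tau_vA_v^{\T}A_v)\Delta_v\|_2\le\|\Delta_v\|_2$; squaring and summing produces $\|\Delta'\|_F\le\|\Delta\|_F$. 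Combining these two estimates through the triangle inequality,
\[
\|U-W\|_F\le(1-q)\|\Delta(P^{\T})^2\|_F+q\|\Delta'\|_F\le\|\Delta\|_F,
\]
which chained with the soft-thresholding bound yields $\|\Gamma X-\Gamma Z\|_F\le\|X-Z\|_F$, as required.

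The only place requiring any care is the gradient piece: the operators $I-\tau_vA_v^{\T}A_v$ depend on $v$, so it is not a single matrix multiplication on $\Delta$, and one must argue columnwise before reassembling in the Frobenius norm. Everything else is a straightforward application of $\eta_\alpha$ being nonexpansive, the submultiplicativity $\|MN\|_F\le\|M\|_F\|N\|_2$, and the fact that a symmetric stochastic matrix has spectral radius $1$; these are the main ingredients I would cite, rather than grind out.
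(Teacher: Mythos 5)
Your proof is correct, and its skeleton is the one the paper uses: peel off $\eta_\alpha$ by its nonexpansivity, split the remaining affine map into a $(1-q)$-weighted consensus piece and a $q$-weighted gradient piece, and bound each by $\|X-Z\|_F$. Where you genuinely diverge is in how the consensus piece is controlled, and your route is both cleaner and more general. The paper applies the triangle inequality column by column, then expands $(\overline{X}P^{\T})_v$ as an explicit double sum over neighbors of neighbors (using the $d$-regular, uniform-weight structure of Assumptions 1 and 2), and has to grind through Cauchy--Schwarz plus the AM--GM bound $2ab\le a^2+b^2$ to kill the cross term before the squared column bounds resum to $\|X-Z\|_F^2$. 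You instead take the triangle inequality at the level of the whole matrix in Frobenius norm, which eliminates the cross-term bookkeeping entirely, and you bound the consensus block abstractly via $\|\Delta (P^{\T})^2\|_F\le\|\Delta\|_F\|P\|_2^2$ together with $\|P\|_2=1$ for a symmetric stochastic $P$. This buys you a proof that works for any symmetric stochastic $P$ adapted to the graph, not just uniform weights on a regular graph, and isolates the one place where a columnwise argument is genuinely forced (the $v$-dependent blocks $I-\tau_vA_v^{\T}A_v$), which you correctly flag. The paper's hands-on computation buys nothing extra here beyond self-containedness; if anything, your version makes clearer which hypotheses the lemma actually needs.
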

\begin{proof}
Since $\eta_{\alpha}$ is nonexpansive, for any $X, Z \in \R^{n\times\cardV}$,
\begin{align*}
&\left\|(\Gamma X)_v-(\Gamma Z)_v\right\|^2_2\\
&\leq\left\|(1-q)(\overline{\overline{ x}}_v-\overline{\overline{ z}}_v)+q(I-\tau A_{v}^{\mathsf{T}}A_v)( x_v- z_v)\right\|^2_2\\
&\leq\left[(1-q)\left\|\overline{\overline{ x}}_v-\overline{\overline{ z}}_v\right\|_2
+q\left\|I-\tau A_{v}^{\mathsf{T}}A_v\right\|_2\left\| x_v- z_v\right\|_2\right]^2.\\
\end{align*}
Notice that $I-\tau A_{v}^{\mathsf{T}}A_v$ always has the eigenvalue 1 with algebraic multiplicity $n-m$, as the rank of $A_v$ is $m$. Moreover, if $\tau<\left\|A_v\right\|_2^{-2}$, $I-\tau A_{v}^{\mathsf{T}}A_v$ is positive definite and its spectral radius is 1. Since $I-\tau A_{v}^{\mathsf{T}}A_v$ is a symmetric matrix, we then have  $\left\|I-\tau A_{v}^{\mathsf{T}}A_v\right\|_2=1$. Thus, applying the triangular inequality,
\begin{align*}
&\left\|(\Gamma X)_v-(\Gamma Z)_v\right\|^2_2\leq\left[(1-q)\left\|(\overline{\overline{ x}}_v-\overline{\overline{ z}}_v)\right\|_2
+q\left\| x_v- z_v\right\|_2\right]^2\\ 
&\leq \frac{(1-q)^2}{d^4}\left(\sum_{w\in\mathcal N_v}\sum_{w'\in\mathcal N_w}\left\| x_{w'}-z_{w'}\right\|_2\right)^2
+q^2\left\| x_v- z_v\right\|_2^2\\
&~~~+ \frac{2(1-q)q}{d^2}\sum_{w\in\mathcal N_v}\sum_{w'\in\mathcal N_w}\left\| x_{w'}-z_{w'}\right\|_2\left\| x_v- z_v\right\|_2.
\end{align*}

Applying the Cauchy-Schwarz inequality, we obtain 
\begin{align*}
&\left\|(\Gamma X)_v-(\Gamma Z)_v\right\|^2_2\\
&\leq \frac{(1-q)^2}{d^2}\sum_{w\in\mathcal N_v}\sum_{w'\in\mathcal N_w}\left\| x_{w'}-z_{w'}\right\|_2^2
+q^2\left\| x_v- z_v\right\|_2^2\\
&~~~+ \frac{2(1-q)q}{d^2}\sum_{w\in\mathcal N_v}\sum_{w'\in\mathcal N_w}\left\| x_{w'}-z_{w'}\right\|_2\left\| x_v- z_v\right\|_2.
\end{align*}
Finally, summing over all $v\in\mathcal V$ and considering that  $2\left\| x_{w'}- z_{w'}\right\|_2\left\| x_v- z_v\right\|_2\leq \left\| x_{w'}- z_{w'}\right\|^2_2+\left\| x_v- z_v\right\|^2_2$,
\begin{align*}
&\left\|(\Gamma X)-(\Gamma Z)\right\|^2_F=\sum_{v\in\mathcal{V}}\left\|(\Gamma X)_v-(\Gamma Z)_v\right\|^2_2\\
&\leq (1-q)^2\left\| X- Z\right\|_F^2+q^2\left\| X- Z\right\|_F^2+2(1-q)q \left\| X- Z\right\|_F^2\\
&\leq \left\| X- Z\right\|_F^2.
\end{align*}
\end{proof}

\textbf{Proof of Theorem 2}  Given the numerical convergence (proved in Proposition 7), the existence of fixed points (guaranteed by Theorem 1), and the nonexpansivity (Lemma \ref{nonexp}) of the operator $\Gamma$, the assertion follows from a direct application of the Opial's theorem.\qed

\end{document}